\documentclass[sn-mathphys,Numbered]{sn-jnl}


\usepackage{graphicx}%
\usepackage{multirow}%
\usepackage{amsmath,amssymb,amsfonts}%
\usepackage{amsthm}%
\usepackage{mathrsfs}%
\usepackage[title]{appendix}%
\usepackage{xcolor}%
\usepackage{textcomp}%
\usepackage{manyfoot}%
\usepackage{booktabs}%
\usepackage{algorithm}%
\usepackage{algorithmicx}%
\usepackage{algpseudocode}%
\usepackage{listings}%



\theoremstyle{thmstyleone}%
\newtheorem{theorem}{Theorem}
\newtheorem{proposition}[theorem]{Proposition}%

\theoremstyle{thmstyletwo}%
\newtheorem{example}{Example}%
\newtheorem{remark}{Remark}%

\theoremstyle{thmstylethree}%
\newtheorem{definition}{Definition}%

\raggedbottom

\begin{document}

\title[Pedal and contrapedal curves of mixed-type Minkowski plane curves]{Pedal and contrapedal curves of mixed-type Minkowski plane curves}


\author[1]{\fnm{Xin} \sur{Zhao}}\email{zhaoxin@qdu.edu.cn}

\author*[2]{\fnm{Pengcheng} \sur{Li}}\email{lipc@hit.edu.cn}


\affil[1]{\orgdiv{School of Mathematics and Statistics}, \orgname{Qingdao University}, \orgaddress{
\city{Qingdao}, \postcode{266071}, \state{Shandong}, \country{P. R. China}}}

\affil*[2]{\orgdiv{Department of Mathematics}, \orgname{Harbin Institute of Technology}, \orgaddress{
\city{Weihai}, \postcode{264209}, \state{Shandong}, \country{P. R. China}}}



\abstract{Pedal and contrapedal curves are important study objects of plane curves. As for a mixed-type Minkowski plane curve, since the definitions of the pedal and contrapedal curves at lightlike points can not always be given, the investigation of them is difficult. We have done some research on the pedal curves of a mixed-type curve. In this paper, we discuss when the contrapedal curves of a mixed-type curve exist and give the definition of them when they exist. Then, we study  when the contrapedal curves of the mixed-type curve have singular points. Meanwhile, we consider the types of the points on the contrapedal curves. Moreover, we investigate the relationship between the pedal and contrapedal curves of a mixed-type curve, as well as the relationship among them and the evolute of the base curve.}

\keywords{mixed-type curve, pedal curve, contrapedal curve, evolute, lightlike point, Minkowski plane}



\maketitle

\section{Introduction}\label{sec1}
\label{intro}
Pedal and contrapedal curves are important mathematical models in classical mechanics, and they have drawn the attention of many scientists. For instance, P. Blaschke showed that the trajectory of a test particle under the influence of central and Lorentz-like forces can be translated into pedal coordinates at once without the need of solving any differential equation (see \cite{Blaschke1}). P. Blaschke, F. Blaschke and M. Blaschke investigated the orbits of a free double linkage by the technique of pedal coordinates (see \cite{Blaschke2}). They provided a geometrical construction for them and also show a surprising connection between this mechanical system and orbits around a Black Hole and solutions of Dark Kepler problem. Besides, the pedal and contrapedal curves of a curve in the Euclidean space are important and classical study objects in differential geometry. In the Euclidean plane,  the pedal curve is always defined as the locus of the foot of the perpendicular from a given point to the tangent of the base curve, and the contrapedal curve is always defined by the locus of the foot of the perpendicular from a given point to the normal of the base curve. There has been a lot of investigations about pedal curves and contrapedal curves in different spaces. Specifically, M. Bo\v{z}ek and G. Folt\'{a}n investigated the relationship between the singular points of the pedal curves of a regular curve and the inflections of the base curve in the Euclidean plane (see \cite{s-a-p}). Y. Li and D. Pei considered the pedal curves of a curve that the singular points are allowed in the sphere
(see \cite{MR3417950}). Additionally, they investigated the pedal curves of a curve that the singular points
are allowed in the Euclidean plane (see \cite{MR3778101}). Later, O. O\u{g}ulcan Tuncer, H. Ceyhan, \"{I}. G\"{o}k and F. N. Ekmekci conducted further research about the properties of the pedal curves and the
contrapedal curves of a curve in the Euclidean plane, as well as they studied the relationship between the pedal curves and the contrapedal curves (see \cite{MR3843581}).

The Minkowski space is the space-time model of Einstein special theory of relativity. It played an important
role in the development of modern physics. Therefore, the investigation of the submanifolds in the Minkowski
space and its subspaces has great significance (see \cite{MR4296313, Chenl, MR2358723, MR2735275, Oztek,
MR2056288, Tukel, MR2944524, yang, Silva}). The Minkowski space is the space with index 1. Due to the appearance of the index, the pseudo-scalar product of any non-zero vector and itself in this space may be greater than zero, less than zero or equal to zero. These three cases correspond to spacelike vectors, timelike vectors and lightlike vectors, respectively. However, for a curve in the Minkowski space, the type of a point on the curve is determined by the type of the tangent vector at this point. Among them, we call the curve composed of spacelike points or timelike points a spacelike curve or a timelike curve, and these two types of curves are collectively called non-lightlike curves.  We call a curve composed entirely of lightlike points a lightlike curve. The study of the non-lightlike curves in the Minkowski space is similar to the study of the curves in the Euclidean space. Mathematicians often select the arc-length parameter and study non-ligthlike curves by Frenet-Serret frame. There has been a wealth of research results on non-lightlike curves (see \cite{Tukel, MR2944524,MR4296313, yang}). For lightlike curves, the arc-length parameter cannot be selected because the tangent vector at each point is lightlike, the Frenet-Serret frame is not applicable anymore. In order to study lightlike curves, K. L. Duggal and D. H. Jin introduced the concept of the pseudo arc-length parameter and established Cartan frame (see \cite{MR2358723}).  By the proper research tools, there are a lot of research results about lightlike curves (see \cite{Oztek, MR4141668}).

The curves in the Minkowski space are usually not composed of a single type of points, but there are spacelike points, timelike points and lightlike points simultaneously on a curve, which is called the mixed-type curve.
As a kind of more general curves, mixed-type curves have important research value.
As the curvature at the lightlike points of a mixed-type curve cannot be defined, the traditional Frenet-Serret frame is no longer applicable. Therefore, the investigation on mixed-type curves is almost blank because of the lack of necessary tools. Until 2018, S. Izumiya, M. C. Romero Fuster and M. Takahashi gave the lightcone frame in the Minkowski plane for the first time and established fundamental theory of the regular mixed-type curves. As an application of the theory, they defined the evolute of a regular mixed curve in the Minkowski plane using the lightcone frame and explored its singular points (see \cite{MR3839951}). In 2020, T. Liu and D. Pei established the lightcone frame in the Minkowski 3-space and used the frame to study the mixed-type curves in this space (see \cite{MR4105773}).  Thus, the research progress of mixed-type curves is further promoted. As the study of the mixed-type curves in the Minkowski plane is not perfect yet, we further studied the mixed-type curves in the Minkowski plane and perfected the fundamental theory of the mixed-type type curves in this space. We investigated the $(n,m)$-cusp mixed-type curves and the evolutes of them with the modified Frenet-Serret type frame in 2021 (see \cite{MR4189889}). Later we investigated the evolutoids of the mixed-type curves with the help of the method of the region division in the Minkowski plane (see \cite{MR4357157}). In addition, we did some work on the pedal curves of the mixed-type curves in the Minkowski plane (see \cite{z-p-pedal}).
But as for the contrapedal curves of the mixed-type curves, there is no relevant investigation. Therefore,
here we devote ourselves to investigate the contrapedal curves of the mixed-type curves in the Minkowski plane. Also we investigate the relationship of the pedal and contrapedal curves of the mixed-type curves.

The brief structure of this paper is as follows. In Section \ref{sec_1}, we introduce some essential knowledge about the Minkowski plane. Then, we discuss when the contrapedal curves of the mixed-type curves exist and give the definition of the contrapedal curves when they exist in Section \ref{sec_2}. We also discuss the specific forms of the point on the contrapedal curves corresponding to the lightlike point on the base curve in this section. In addition, we consider when the contrapedal curves have singular points and investigate relationship of the types of the points on the contrapedal curves and the types of the points on the base curve. Finally, we consider the relationship between the pedal curves and contrapedal curves of a mixed-type curve, as well as the relationship among them and the evolutes of the mixed-type curve in Section \ref{sec_3}.

Unless otherwise stated, the mappings and submanifolds covered in this paper are smooth.

\section{Preliminaries}\label{sec_1}
We now review some basic conceptions about the Minkowski plane.

Let $\mathbb{R}^2=\{(u_1,u_2)\ |u_i\in\mathbb{R},\ i=1,~2\}$ be the vector space of dimension 2.
If $\mathbb{R}^2$ is endowed with the metric induced by the \emph{pseudo-scalar product}
$$\langle\boldsymbol {u},\boldsymbol {v}\rangle=-u_1v_1+u_2v_2,$$
where $\boldsymbol {u}=(u_1,u_2)$, $\boldsymbol {v}=(v_1,v_2)$,
and $\boldsymbol {u}, \boldsymbol {v}\in\mathbb{R}^2$. Then we call $(\mathbb{R}^2,\langle,\rangle)$ the
\emph{Minkowski plane} and denote it by $\mathbb{R}_1^2$. 
\par

For a non-zero vector $\boldsymbol {u}\in\mathbb{R}_1^2$, it is called \emph{spacelike},
\emph{timelike} or \emph{lightlike}, if $\langle\boldsymbol {u},\boldsymbol {u}\rangle\textgreater0$,
$\langle\boldsymbol {u},\boldsymbol {u}\rangle\textless0$ or $\langle\boldsymbol {u},\boldsymbol {u}\rangle=0$,
respectively. A  spacelike or timelike vector is called a \emph{non-lightlike} vector.

For a vector $\boldsymbol {u}\in\mathbb{R}_1^2$, if there exists a vector $\boldsymbol {v}\in\mathbb{R}_1^2$,
such that $\langle\boldsymbol {u},\boldsymbol {v}\rangle=0$, we say $\boldsymbol {v}$ is
\emph{pseudo-orthogonal} to $\boldsymbol {u}$.

The \emph{norm} of a vector $\boldsymbol {u}=(u_1,u_2)\in\mathbb{R}_1^2$ is defined by
$$\|\boldsymbol {u}\|=\sqrt{|\langle\boldsymbol {u},\boldsymbol {u}\rangle|}.$$
The pseudo-orthogonal vector of $\boldsymbol {u}$ is given by $\boldsymbol {u}^\perp=(u_2,u_1)$.
By definition, $\boldsymbol {u}$ and $\boldsymbol {u}^\perp$ are pseudo-orthogonal to each other,
and $$\|\boldsymbol {u}\|=\|\boldsymbol {u}^\perp\|.$$ 

It's obvious that $\boldsymbol {u}^\perp=\pm\boldsymbol {u}$ if and only if $\boldsymbol {u}$ is lightlike, and $\boldsymbol {u}^\perp$ is timelike (resp. spacelike) if and only if $\boldsymbol {u}$ is spacelike (resp. timelike). 
\par

Denote $\mathbb{L}^+=(1,1)$ and $\mathbb{L}^-=(1,-1)$. Then $\mathbb{L}^+$ and $\mathbb{L}^-$ are independent lightlike vectors, and $\langle\mathbb{L}^+,\mathbb{L}^-\rangle=-2$. We call $\{\mathbb{L}^+,\mathbb{L}^-\}$ a \emph{lightcone frame} in $\mathbb{R}_1^2$. It is given by S. Izumiya, M. C. Romero Fuster and M. Takahashi in \cite{MR3839951}. \par 

Let $\gamma:I\rightarrow\mathbb{R}_1^2$ be a regular mixed-type curve. There exists a smooth map $(\alpha,\beta):I\rightarrow\mathbb{R}^2\backslash \{\boldsymbol{0}\}$ such that
\begin{align} \nonumber
\dot\gamma(t)=\alpha(t){\mathbb{L}^+}+\beta(t){\mathbb{L}^-},
\end{align}
for all $t\in{I}$. We say that \emph{a regular curve $\gamma$ with the lightlike tangential data $(\alpha,\beta)$} if the above equation holds. Then we have $$\dot\gamma(t)^\bot=\alpha(t){\mathbb{L}^+}-\beta(t){\mathbb{L}^-}.$$
Since $$\langle\dot\gamma(t),\dot\gamma(t)\rangle=-4\alpha(t)\beta(t),$$ $\gamma(t_0)$ is a spacelike (resp. lightlike or timelike) point if and only if $\alpha(t_0)\beta(t_0)<0$ (resp. $=0$ or $>0$).

\begin{definition}
Let $\gamma:I\rightarrow\mathbb{R}_1^2$ be a regular mixed-type curve. We call a point $\gamma(t_0)$ an inflection if $\langle\ddot\gamma(t_0),\dot\gamma(t_0)^\perp\rangle=0$.
\end{definition}
On the basis of the above definition, if $\langle\ddot\gamma(t_0),\dot\gamma(t_0)^\perp\rangle'\neq0$, we call $\gamma(t_0)$ an ordinary inflection.

If we choose the lightcone frame $\{\mathbb{L}^+,\mathbb{L}^-\}$ and the lightlike tangential data $(\alpha,\beta)$, then we can obtain that $\gamma(t_0)$ is an inflection point if and only if $$\dot\alpha(t_0)\beta(t_0)-\alpha(t_0)\dot\beta(t_0)=0,$$ and $\gamma(t_0)$ is an ordinary inflection point means not only $$\dot\alpha(t_0)\beta(t_0)-\alpha(t_0)\dot\beta(t_0)=0,$$ but also $$\ddot\alpha(t_0)\beta(t_0)-\alpha(t_0)\ddot\beta(t_0)\neq0.$$

Let $\gamma:I\rightarrow\mathbb{R}_1^2$ be a regular mixed-type curve without inflections. In \cite{MR3839951}, we have known that the evolute $Ev:I\rightarrow\mathbb{R}_1^2$ of $\gamma$ with the lightlike data $(\alpha,\beta)$ is defined as
$$Ev(\gamma)(t)=\gamma(t)-\frac{2\alpha(t)\beta(t)}{\dot\alpha(t)\beta(t)-\alpha(t)\dot\beta(t)}(\alpha(t)\mathbb{L}^+-\beta(t)\mathbb{L}^-).$$

\section{Contrapedal curves of the mixed-type curves}\label{sec_2}
For a regular curve in the Minkowski plane, the pedal curve of it is always defined by the pseudo-orthogonal projection of a fixed point on the tangent lines of the base curve, and the contrapedal curve of it is always defined by the pseudo-orthogonal projection of a fixed point on the normal lines of the base curve. We have investigated the pedal curves of the mixed-type curves in \cite{z-p-pedal}. Herein, we consider the contrapedal curves of the mixed-type curves.

Similar to the definition of the pedal curve of the regular mixed-type curve, we can define the contrapedal curve of a regular mixed-type curve as follows.

\begin{definition}
Let $\gamma:I\rightarrow\mathbb{R}_1^2$ be a regular mixed-type curve and $\boldsymbol{Q}$ be a point in $\mathbb{R}_1^2$. If $\boldsymbol{Q}$ is consistent with the lightlike point or $\boldsymbol{Q}$ is on the tangent line of the lightlike point, then the contrapedal curve $CPe(\gamma)(t)$ of the base curve $\gamma(t)$ is given by
\begin{align}\label{eq1}
CPe(\gamma)(t)=\gamma(t)+\frac{\langle\boldsymbol{Q}-\gamma(t),\alpha(t){\mathbb{L}^+}-\beta(t){\mathbb{L}^-}\rangle}
{4\alpha(t)\beta(t)}(\alpha(t){\mathbb{L}^+}-\beta(t){\mathbb{L}^-}).
\end{align}
\end{definition}

If $\gamma(t_0)$ is a non-lightlike point, then $CPe(\gamma)(t_0)$ satisfies above form obviously.

If $\gamma(t_0)$ is a lightlike point, then $\alpha(t_0)\beta(t_0)=0$, and we suppose that $\boldsymbol{Q}$ is consistent with the lightlike point or $\boldsymbol{Q}$ is on the tangent line of the lightlike point. In this case we define $Pe(\gamma)(t_0)$ as $\displaystyle\lim_{t\to t_0}Pe(\gamma)(t)$, and the specific forms of the contrapedal curve at $\gamma(t_0)$ are as follows.

\textbf{CASE I}. Suppose that $\alpha(t_0)\neq0$ and $\beta(t_0)=0$, then $\boldsymbol{Q}$ is consistent with $\gamma(t_0)$ or $\boldsymbol{Q}$ is on the tangent line of $\gamma(t_0)$ is exactly $\langle\boldsymbol{Q}-\gamma(t),{\mathbb{L}^+}\rangle=0$.

If $\boldsymbol{Q}$ is consistent with $\gamma(t_0)$, then
$$CPe(\gamma)(t_0)=\gamma(t_0).$$

If $\boldsymbol{Q}$ is on the tangent line of the lightlike point, then
$$CPe(\gamma)(t_0)=\gamma(t_0)-\frac{1}{4}\langle\boldsymbol{Q}-\gamma(t_0),{\mathbb{L}^-}\rangle{\mathbb{L}^+}.$$

\textbf{CASE II}. Suppose that $\alpha(t_0)=0$ and $\beta(t_0)\neq0$, then $\boldsymbol{Q}$ is consistent with $\gamma(t_0)$ or $\boldsymbol{Q}$ is on the tangent line of $\gamma(t_0)$ is exactly $\langle\boldsymbol{Q}-\gamma(t),{\mathbb{L}^-}\rangle=0$.

If $\boldsymbol{Q}$ is consistent with $\gamma(t_0)$, then
$$CPe(\gamma)(t_0)=\gamma(t_0).$$

If $\boldsymbol{Q}$ is on the tangent line of the lightlike point, then
$$CPe(\gamma)(t_0)=\gamma(t_0)-\frac{1}{4}\langle\boldsymbol{Q}-\gamma(t_0),{\mathbb{L}^+}\rangle{\mathbb{L}^-}.$$

\begin{remark}
Let $\gamma:I\rightarrow\mathbb{R}_1^2$ be a regular mixed-type curve, $\boldsymbol{Q}$ be a point in $\mathbb{R}_1^2$ and $CPe(\gamma):I\rightarrow\mathbb{R}_1^2$ be the contrapedal curve of $\gamma$. Suppose that $\gamma(t_0)$ is a lightlike point, if $\boldsymbol{Q}$ is neither consistent with $\gamma(t_0)$ nor on the tangent line of $\gamma(t_0)$. Since $\alpha(t_0)=0$ or $\beta(t_0)=0$ holds, $\displaystyle\lim_{t\to t_0}\dfrac{\langle\boldsymbol{Q}-\gamma(t),\alpha(t){\mathbb{L}^+}+\beta(t){\mathbb{L}^-}\rangle}
{4\alpha(t)\beta(t)}=\infty$. Therefore, $CPe(\gamma)(t_0)$ is asymptotic with the lightlike line of $\mathbb{L}^+$ or $\mathbb{L}^-$. Specifically, when $\langle\boldsymbol{Q}-\gamma(t),{\mathbb{L}^+}\rangle\neq0$,
$CPe(\gamma)(t_0)$ is asymptotic with the lightlike line along the positive or negative direction of $\mathbb{L}^+.$ When $\langle\boldsymbol{Q}-\gamma(t),{\mathbb{L}^-}\rangle\neq0$, $CPe(\gamma)(t_0)$ is asymptotic with the lightlike line along the positive or negative direction of $\mathbb{L}^-.$
\end{remark}

Next, we consider when the contrapedal curve of a regular mixed-type curve have singular points and we have the following theorem.

\begin{theorem}
Let $\gamma:I\rightarrow\mathbb{R}_1^2$ be a regular mixed-type curve, $\boldsymbol{Q}$ be a point in $\mathbb{R}_1^2$ and $CPe(\gamma):I\rightarrow\mathbb{R}_1^2$ be the contrapedal curve of $\gamma$. Then,

$(1)$ if $\gamma(t_0)$ is a non-lightlike point, then $CPe(\gamma)(t_0)$ is a singular point if and only if
$$\alpha(t_0)+\dfrac{\dot\alpha(t_0)\beta(t_0)-\alpha(t_0)\dot\beta(t_0)}{4\beta^2(t_0)}\langle\boldsymbol{Q}-\gamma(t_0),\mathbb{L}^+\rangle=0$$ and $$\beta(t_0)-\dfrac{\dot\alpha(t_0)\beta(t_0)-\alpha(t_0)\dot\beta(t_0)}{4\alpha^2(t_0)}\langle\boldsymbol{Q}-\gamma(t_0),\mathbb{L}^-\rangle=0;$$

$(2)$ if $\gamma(t_0)$ is a lightlike point, and $\boldsymbol{Q}$ is consistent with $\gamma(t_0)$ or $\boldsymbol{Q}$ is on the tangent line of $\gamma(t_0)$, then $CPe(\gamma)(t_0)$ is regular.
\end{theorem}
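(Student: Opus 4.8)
The plan is to prove both parts by differentiating the defining formula \eqref{eq1} directly, using that $\{\mathbb{L}^+,\mathbb{L}^-\}$ is a basis of $\mathbb{R}_1^2$, so that a vector vanishes precisely when both of its lightcone components do. Throughout I abbreviate $N(t)=\alpha(t)\mathbb{L}^+-\beta(t)\mathbb{L}^-=\dot\gamma(t)^\perp$, $\lambda(t)=\langle\boldsymbol{Q}-\gamma(t),N(t)\rangle/(4\alpha(t)\beta(t))$, and $P^{\pm}(t)=\langle\boldsymbol{Q}-\gamma(t),\mathbb{L}^{\pm}\rangle$. From $\langle\mathbb{L}^+,\mathbb{L}^+\rangle=\langle\mathbb{L}^-,\mathbb{L}^-\rangle=0$, $\langle\mathbb{L}^+,\mathbb{L}^-\rangle=-2$ and $\dot\gamma=\alpha\mathbb{L}^++\beta\mathbb{L}^-$ one gets $\dot P^+=2\beta$ and $\dot P^-=2\alpha$, while expanding the numerator of $\lambda$ gives $\lambda=P^+/(4\beta)-P^-/(4\alpha)$, hence $\dot\lambda=-P^+\dot\beta/(4\beta^2)+P^-\dot\alpha/(4\alpha^2)$.

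For part (1), write $CPe(\gamma)(t)=\gamma(t)+\lambda(t)N(t)$ and differentiate, using $\dot N=\dot\alpha\mathbb{L}^+-\dot\beta\mathbb{L}^-$, to obtain
\begin{align}\label{eqcpediff}
\dot{CPe}(\gamma)(t)=\bigl(\alpha+\dot\lambda\alpha+\lambda\dot\alpha\bigr)\mathbb{L}^++\bigl(\beta-\dot\lambda\beta-\lambda\dot\beta\bigr)\mathbb{L}^-.
\end{align}
Thus $CPe(\gamma)(t_0)$ is a singular point if and only if both coefficients vanish at $t_0$. It then remains to substitute the formulas for $\lambda$ and $\dot\lambda$ and simplify: the terms $\pm P^-\dot\alpha/(4\alpha)$ cancel in the first coefficient and the terms $\pm P^+\dot\beta/(4\beta)$ cancel in the second, leaving $\alpha+\dfrac{\dot\alpha\beta-\alpha\dot\beta}{4\beta^2}P^+$ and $\beta-\dfrac{\dot\alpha\beta-\alpha\dot\beta}{4\alpha^2}P^-$ respectively. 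Evaluating these at $t_0$ and recalling $P^{\pm}(t_0)=\langle\boldsymbol{Q}-\gamma(t_0),\mathbb{L}^{\pm}\rangle$ yields exactly the two displayed equations, which proves (1); the only care needed is keeping track of signs in the two cancellations.

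For part (2), the symmetry $\mathbb{L}^+\leftrightarrow\mathbb{L}^-$, $\alpha\leftrightarrow\beta$ reduces everything to CASE I, so assume $\alpha(t_0)\neq0$ and $\beta(t_0)=0$; as observed before the theorem, the hypothesis on $\boldsymbol{Q}$ is then equivalent to $P^+(t_0)=0$, a condition that covers both the case $\boldsymbol{Q}=\gamma(t_0)$ and the case that $\boldsymbol{Q}$ lies on the tangent line of $\gamma(t_0)$. Expanding $\lambda N$ in the lightcone frame gives
\begin{align}\label{eqcpeexp}
CPe(\gamma)(t)=\gamma(t)+\Bigl(\tfrac{\alpha P^+}{4\beta}-\tfrac{P^-}{4}\Bigr)\mathbb{L}^++\Bigl(-\tfrac{P^+}{4}+\tfrac{\beta P^-}{4\alpha}\Bigr)\mathbb{L}^-,
\end{align}
in which every term is smooth near $t_0$ except $\alpha P^+/(4\beta)$. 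For that term, $\dot P^+=2\beta$ together with $P^+(t_0)=0$ gives $P^+(t)=2\int_{t_0}^t\beta(s)\,ds$, so $P^+$ vanishes at $t_0$ to strictly higher order than $\beta$; hence $h:=P^+/\beta$ extends to a differentiable function across $t_0$ with $h(t_0)=0$ and $\dot h(t_0)>0$ (indeed $\dot h(t_0)=1$ in the generic case $\dot\beta(t_0)\neq0$). Differentiating \eqref{eqcpeexp} at $t_0$, using $\dot\gamma(t_0)=\alpha(t_0)\mathbb{L}^+$, $h(t_0)=0$ and $\dot P^-(t_0)=2\alpha(t_0)$, the $\mathbb{L}^+$-component of $\dot{CPe}(\gamma)(t_0)$ comes out as $\alpha(t_0)\bigl(\tfrac12+\tfrac14\dot h(t_0)\bigr)$, which is nonzero because $\alpha(t_0)\neq0$ and $\tfrac12+\tfrac14\dot h(t_0)>0$. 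Therefore $\dot{CPe}(\gamma)(t_0)\neq\boldsymbol{0}$, i.e.\ $CPe(\gamma)(t_0)$ is regular, and CASE II follows at once by interchanging $\mathbb{L}^+$ and $\mathbb{L}^-$.

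I expect the main obstacle to be precisely the differentiability of $CPe(\gamma)$ at the lightlike point invoked in the previous paragraph: formula \eqref{eq1} carries the factor $4\alpha\beta$ in its denominator and $CPe(\gamma)(t_0)$ is only defined there through a limit, so one must argue carefully --- via $P^+(t)=2\int_{t_0}^t\beta(s)\,ds$ --- that the hypothesis $P^+(t_0)=0$ genuinely removes the singularity of the term $\alpha P^+/(4\beta)$ and then compute its derivative at $t_0$. Part (1) and the algebraic simplifications in part (2) are routine once the lightcone data $P^{\pm}$, $\dot P^{\pm}$ have been recorded.
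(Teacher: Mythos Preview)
Your proof is correct and follows the same strategy as the paper: differentiate \eqref{eq1} in the lightcone frame to obtain the two component equations for part~(1), and for part~(2) analyse the indeterminate term at the lightlike point to see that the $\mathbb{L}^+$-component of $\dot{CPe}(\gamma)(t_0)$ is a nonzero multiple of $\alpha(t_0)$. Your device of writing $h=P^+/\beta$ and using $P^+(t)=2\int_{t_0}^{t}\beta$ makes the limit computation in part~(2) more explicit than the paper's single-line assertion that the limit is nonzero, but the underlying argument is the same.
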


\begin{proof}
Since the contrapedal curve of the mixed-type curve $\gamma(t)$ is given by the formula (\ref{eq1}), by calculations, we can obtain
\begin{align}\label{eq2}\nonumber
&\dot CPe(\gamma)(t)\\ \nonumber
=&(\alpha(t)+\frac{\dot\alpha(t)\beta(t)-\alpha(t)\dot\beta(t)}{4\beta^2(t)}\langle\boldsymbol{Q}-\gamma(t),\mathbb{L}^+\rangle)\mathbb{L}^+\\
+&(\beta(t)-\frac{\dot\alpha(t)\beta(t)-\alpha(t)\dot\beta(t)}{4\alpha^2(t)}\langle\boldsymbol{Q}-\gamma(t),\mathbb{L}^-\rangle)\mathbb{L}^-.
\end{align}
When $\gamma(t_0)$ is a non-lightlike point, $\dot CPe(\gamma)(t_0)=0$ if and only if
$$\alpha(t_0)+\frac{\dot\alpha(t_0)\beta(t_0)-\alpha(t_0)\dot\beta(t_0)}{4\beta^2(t_0)}\langle\boldsymbol{Q}-\gamma(t_0),\mathbb{L}^+\rangle=0$$
and
$$\beta(t_0)-\frac{\dot\alpha(t_0)\beta(t_0)-\alpha(t_0)\dot\beta(t_0)}{4\alpha^2(t_0)}\langle\boldsymbol{Q}-\gamma(t_0),\mathbb{L}^-\rangle=0.$$
When $\gamma(t_0)$ is a lightlike point, we consider the condition that $\alpha(t_0)\neq0$, $\beta(t_0)=0$ first. Since $\beta(t_0)=0$, we cannot calculate $\dfrac{\dot\alpha(t_0)\beta(t_0)-\alpha(t_0)\dot\beta(t_0)}{4\beta^2(t_0)}\langle\boldsymbol{Q}-\gamma(t_0),\mathbb{L}^+\rangle\mathbb{L}^+$. We have known that when $\langle\boldsymbol{Q}-\gamma(t),{\mathbb{L}^+}\rangle\neq0$,
$CPe(\gamma)(t_0)$ is asymptotic with the lightlike line along the positive or negative direction of $\mathbb{L}^+.$ Thus we consider the condition that $\langle\boldsymbol{Q}-\gamma(t_0),\mathbb{L}^+\rangle=0$.

We define $\dfrac{(\dot\alpha(t_0)\beta(t_0)-\alpha(t_0)\dot\beta(t_0))\langle\boldsymbol{Q}-\gamma(t_0),\mathbb{L}^+\rangle}{\beta^2(t_0)}$ as $$\displaystyle\lim_{t\to t_0}\frac{(\dot\alpha(t)\beta(t)-\alpha(t)\dot\beta(t))\langle\boldsymbol{Q}-\gamma(t),\mathbb{L}^+\rangle}{\beta^2(t)}.$$ By calculation, we can get that
$$\displaystyle\lim_{t\to t_0}\frac{(\dot\alpha(t)\beta(t)-\alpha(t)\dot\beta(t))\langle\boldsymbol{Q}-\gamma(t),\mathbb{L}^+\rangle}{\beta^2(t)}
\neq0.$$
Therefore, $CPe(\gamma)(t_0)$ is a regular point.

For the condition that $\alpha(t_0)=0$, $\beta(t_0)\neq0$ and $\langle\boldsymbol{Q}-\gamma(t),{\mathbb{L}^-}\rangle=0$, we can get $CPe(\gamma)(t_0)$ is a regular point similarly.
\end{proof}

For the sake of the following discussion, let
\begin{align}\nonumber
&\Omega(t)\\ \nonumber
=&1+\dfrac{\dot\alpha(t)\beta(t)-\alpha(t)\dot\beta(t)}{4\alpha(t)\beta^2(t)}\langle\boldsymbol{Q}-\gamma(t),\mathbb{L}^+\rangle
-\dfrac{\dot\alpha(t)\beta(t)-\alpha(t)\dot\beta(t)}{4\alpha^2(t)\beta(t)}\langle\boldsymbol{Q}-\gamma(t),\mathbb{L}^-\rangle\\
&-\dfrac{(\dot\alpha(t)\beta(t)-\alpha(t)\dot\beta(t))^2}{16\alpha^3(t)\beta^3(t)}
\langle\boldsymbol{Q}-\gamma(t_0),\mathbb{L}^+\rangle\langle\boldsymbol{Q}-\gamma(t),\mathbb{L}^-\rangle, \nonumber
\end{align}
then the following proposition shows the type of points of the contrapedal curve of a mixed-type curve.

\begin{proposition}
Let $\gamma:I\rightarrow\mathbb{R}_1^2$ be a regular mixed-type curve, $\boldsymbol{Q}$ be a point in $\mathbb{R}_1^2$ and $CPe(\gamma):I\rightarrow\mathbb{R}_1^2$ be the contrapedal curve of $\gamma$. For a regular point  $CPe(\gamma)(t_0)$, we have  the following consequences.\\
$(1)$ When $\gamma(t_0)$ is a non-lightlike point,

${\rm(\romannumeral1)}$ if $\Omega(t_0)>0$, then $CPe(\gamma)(t_0)$ is a spacelike (or timelike) point if and

\ \ \ only if $\gamma(t_0)$ is a spacelike (or timelike) point;

${\rm(\romannumeral2)}$ if $\Omega(t_0)<0$, then $CPe(\gamma)(t_0)$ is a spacelike (or timelike) point if and

\ \ \ only if $\gamma(t_0)$ is a timelike (or spacelike) point;

${\rm(\romannumeral3)}$ if $\Omega(t_0)=0$, then $CPe(\gamma)(t_0)$ is a lightlike point.\\
$(2)$ When $\gamma(t_0)$ is a lightlike point, $\alpha(t_0)\neq0$, $\beta(t_0)=0$ and $\langle\boldsymbol{Q}-\gamma(t_0),\mathbb{L}^+\rangle=0$,

  ${\rm(\romannumeral1)}$ suppose that $\gamma(t_0)$ is not the inflection of $\gamma$,

  \ \ \ $(a)$ $CPe(\gamma)(t_0)$ is a lightlike point if and only if $\boldsymbol{Q}$ is consistent with

  \ \ \ \ \ \ \ $\gamma(t_0)$;

  \ \ \ $(b)$ $CPe(\gamma)(t_0)$ is a non-lightlike point if and only if $\boldsymbol{Q}$ is on the tangent

  \ \ \ \ \ \ \ line of $\gamma(t_0)$. Moreover, $CPe(\gamma)(t_0)$ is a spacelike (or, timelike) point

  \ \ \ \ \ \ \ if and only if $\langle\boldsymbol{Q}-\gamma(t_0),\mathbb{L}^-\rangle\dot\beta(t_0)<0$ (or, $\langle\boldsymbol{Q}
  -\gamma(t_0),\mathbb{L}^-\rangle\dot\beta(t_0)>0$);

 ${\rm(\romannumeral2)}$ suppose that $\gamma(t_0)$ is an ordinary inflection of $\gamma$, then $CPe(\gamma)(t_0)$ is

  \ \ \ always lightlike.\\
$(3)$ When $\gamma(t_0)$ is a lightlike point, $\alpha(t_0)=0$, $\beta(t_0)\neq0$ and $\langle\boldsymbol{Q}-\gamma(t_0),\mathbb{L}^-\rangle=0$,

  ${\rm(\romannumeral1)}$ suppose that $\gamma(t_0)$ is not the inflection of $\gamma$,

  \ \ \ $(a)$ $CPe(\gamma)(t_0)$ is a lightlike point if and only if $\boldsymbol{Q}$ is consistent with

  \ \ \ \ \ \ \ $\gamma(t_0)$;

  \ \ \ $(b)$ $CPe(\gamma)(t_0)$ is a non-lightlike point if and only if $\boldsymbol{Q}$ is on the tangent

  \ \ \ \ \ \ \ line of $\gamma(t_0)$. Moreover, $CPe(\gamma)(t_0)$ is a spacelike (or, timelike) point

  \ \ \ \ \ \ \ if and only if $\langle\boldsymbol{Q}-\gamma(t_0),\mathbb{L}^+\rangle\dot\alpha(t_0)<0$ (or, $\langle\boldsymbol{Q}
  -\gamma(t_0),\mathbb{L}^+\rangle\dot\alpha(t_0)>0$);

 ${\rm(\romannumeral2)}$ suppose that $\gamma(t_0)$ is an ordinary inflection of $\gamma$, then $CPe(\gamma)(t_0)$ is

  \ \ \ always lightlike.
\end{proposition}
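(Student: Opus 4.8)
The plan is to read everything off the formula for $\dot{CPe}(\gamma)(t)$ already derived in the proof of the preceding theorem. Write $\dot{CPe}(\gamma)(t)=A(t)\mathbb{L}^++B(t)\mathbb{L}^-$, where
$$A(t)=\alpha(t)+\frac{\dot\alpha(t)\beta(t)-\alpha(t)\dot\beta(t)}{4\beta^2(t)}\langle\boldsymbol{Q}-\gamma(t),\mathbb{L}^+\rangle,\qquad B(t)=\beta(t)-\frac{\dot\alpha(t)\beta(t)-\alpha(t)\dot\beta(t)}{4\alpha^2(t)}\langle\boldsymbol{Q}-\gamma(t),\mathbb{L}^-\rangle.$$
Since $\langle\mathbb{L}^+,\mathbb{L}^+\rangle=\langle\mathbb{L}^-,\mathbb{L}^-\rangle=0$ and $\langle\mathbb{L}^+,\mathbb{L}^-\rangle=-2$, one gets $\langle\dot{CPe}(\gamma)(t),\dot{CPe}(\gamma)(t)\rangle=-4A(t)B(t)$, so the causal character of a regular point $CPe(\gamma)(t_0)$ is determined by the sign of $A(t_0)B(t_0)$. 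For part $(1)$ I would expand the product $A(t_0)B(t_0)$ and compare it with the defining expression of $\Omega$, obtaining the identity $A(t_0)B(t_0)=\alpha(t_0)\beta(t_0)\,\Omega(t_0)$. Because $\gamma(t_0)$ non-lightlike forces $\alpha(t_0)\beta(t_0)\neq0$, and $\gamma(t_0)$ is spacelike (resp. timelike) precisely when $\alpha(t_0)\beta(t_0)<0$ (resp. $>0$), the three cases on the sign of $\Omega(t_0)$ immediately give $(1)(i)$–$(1)(iii)$.

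For parts $(2)$ and $(3)$ the extra input is the pair of identities $\frac{d}{dt}\langle\boldsymbol{Q}-\gamma(t),\mathbb{L}^+\rangle=-\langle\dot\gamma(t),\mathbb{L}^+\rangle=2\beta(t)$ and $\frac{d}{dt}\langle\boldsymbol{Q}-\gamma(t),\mathbb{L}^-\rangle=2\alpha(t)$. In case $(2)$ ($\alpha(t_0)\neq0$, $\beta(t_0)=0$, $\langle\boldsymbol{Q}-\gamma(t_0),\mathbb{L}^+\rangle=0$) the coefficient $B$ is continuous at $t_0$, so $B(t_0)=\dfrac{\dot\beta(t_0)}{4\alpha(t_0)}\langle\boldsymbol{Q}-\gamma(t_0),\mathbb{L}^-\rangle$, while $A(t_0)$ must be evaluated as $\lim_{t\to t_0}A(t)$: since $\langle\boldsymbol{Q}-\gamma(t_0),\mathbb{L}^+\rangle=0$ and its derivative $2\beta$ also vanishes at $t_0$, both $\langle\boldsymbol{Q}-\gamma(t),\mathbb{L}^+\rangle$ and $\beta^2(t)$ vanish there, and by l'H\^opital's rule (or a short Taylor expansion) one finds, when $\dot\beta(t_0)\neq0$, that $\lim_{t\to t_0}\dfrac{\langle\boldsymbol{Q}-\gamma(t),\mathbb{L}^+\rangle}{\beta^2(t)}=\dfrac1{\dot\beta(t_0)}$, hence $A(t_0)=\tfrac34\alpha(t_0)\neq0$; this also re-derives the regularity assertion of the theorem.

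Once $\dot{CPe}(\gamma)(t_0)=\tfrac34\alpha(t_0)\mathbb{L}^++B(t_0)\mathbb{L}^-$ is known, the sub-cases are routine. If $\gamma(t_0)$ is not an inflection ($\dot\beta(t_0)\neq0$), then $B(t_0)=0$ iff $\langle\boldsymbol{Q}-\gamma(t_0),\mathbb{L}^-\rangle=0$; together with $\langle\boldsymbol{Q}-\gamma(t_0),\mathbb{L}^+\rangle=0$ and nondegeneracy of $\langle,\rangle$ this means $\boldsymbol{Q}=\gamma(t_0)$, giving $(2)(i)(a)$, in which case $\dot{CPe}(\gamma)(t_0)$ is a nonzero multiple of $\mathbb{L}^+$, hence lightlike. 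Otherwise $\boldsymbol{Q}$ lies on the tangent line and is distinct from $\gamma(t_0)$, $B(t_0)\neq0$, and $\langle\dot{CPe}(\gamma)(t_0),\dot{CPe}(\gamma)(t_0)\rangle=-4A(t_0)B(t_0)=-\tfrac34\dot\beta(t_0)\langle\boldsymbol{Q}-\gamma(t_0),\mathbb{L}^-\rangle$, whose sign yields $(2)(i)(b)$. If $\gamma(t_0)$ is an ordinary inflection then $\dot\alpha(t_0)\beta(t_0)-\alpha(t_0)\dot\beta(t_0)=0$, so $B(t_0)=\beta(t_0)=0$ automatically and $\dot{CPe}(\gamma)(t_0)$ is again a multiple of $\mathbb{L}^+$, giving $(2)(ii)$ — here one still checks $A(t_0)\neq0$ from a higher-order expansion ($\dot\beta(t_0)=0$ but $\ddot\beta(t_0)\neq0$, so $\langle\boldsymbol{Q}-\gamma(t),\mathbb{L}^+\rangle$ vanishes to order three, $\beta^2(t)$ to order four, and $\dot\alpha\beta-\alpha\dot\beta$ to order one, giving $A(t_0)=\tfrac23\alpha(t_0)$). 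Part $(3)$ follows by the same argument with the roles of $(\alpha,\mathbb{L}^+)$ and $(\beta,\mathbb{L}^-)$ interchanged.

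The purely computational steps are the identity $A(t)B(t)=\alpha(t)\beta(t)\Omega(t)$ and the indeterminate limits defining $A(t_0)$. The main obstacle is the lightlike case: one must justify interpreting $\dot{CPe}(\gamma)(t_0)$ as $\lim_{t\to t_0}\dot{CPe}(\gamma)(t)$ and identify it with the derivative of the limit-extended smooth curve — which is exactly where the regularity conclusion of the theorem is invoked — and then track carefully the orders of vanishing at $t_0$ so that the limit for $A(t_0)$ is computed correctly, distinguishing the non-inflection sub-case (the relevant quantities balance at order two) from the ordinary-inflection sub-case (orders one, three, and four).
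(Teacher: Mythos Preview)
Your approach is essentially the paper's: compute $\langle\dot{CPe}(\gamma)(t),\dot{CPe}(\gamma)(t)\rangle=-4A(t)B(t)=-4\alpha(t)\beta(t)\Omega(t)$ from formula~(\ref{eq2}) and read off the causal character. The paper's own proof stops after stating this identity and declares the rest ``easy'', whereas you actually carry out the limit computations at the lightlike points (the l'H\^opital/Taylor analysis giving $A(t_0)=\tfrac34\alpha(t_0)$ in the non-inflection sub-case and $A(t_0)=\tfrac23\alpha(t_0)$ in the ordinary-inflection sub-case); this is extra detail the paper omits, not a different strategy.
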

\begin{proof}
As $\dot CPe(\gamma)(t)$ is given by formula (\ref{eq2}), by calculations we can get
\begin{align}\nonumber
&\langle \dot CPe(\gamma)(t),\dot CPe(\gamma)(t)\rangle\\ \nonumber
=&-4\alpha(t)\beta(t)(1+\frac{\dot\alpha(t)\beta(t)-\alpha(t)\dot\beta(t)}{4\alpha(t)\beta^2(t)}
\langle\boldsymbol{Q}-\gamma(t),\mathbb{L}^+\rangle\\ \nonumber
&-\frac{\dot\alpha(t)\beta(t)-\alpha(t)\dot\beta(t)}{4\alpha^2(t)\beta(t)}
\langle\boldsymbol{Q}-\gamma(t),\mathbb{L}^-\rangle\\ \nonumber
&-\frac{(\dot\alpha(t)\beta(t)-\alpha(t)\dot\beta(t))^2}{16\alpha^3(t)\beta^3(t)}
\langle\boldsymbol{Q}-\gamma(t),\mathbb{L}^+\rangle\langle\boldsymbol{Q}-\gamma(t),\mathbb{L}^-\rangle)\\ \nonumber
=&-4\alpha(t)\beta(t)\Omega(t).
\end{align}
Then we can obtain the type of $CPe(\gamma)(t_0)$ easily.
\end{proof}

Following, we give three examples to present the features of the contrapedal curve of the regular mixed-type curve, especially at the lightlike point of the base curve.

\begin{example}
Let $\gamma:[0,2\pi)\rightarrow\mathbb{R}_1^2$ be a regular mixed-type curve, where
$$\gamma(t)=\left(2\cos t, \frac{2\sqrt{3}}{3}\sin t\right).$$
When $t_0=\dfrac{5}{6}\pi$, $\gamma(t_0)$ is a lightlike point. See the blue curve in Figure \ref{f_1}.

If $\boldsymbol{Q}=(0,0)$, then $\langle\boldsymbol{Q}-\gamma(t_0),\mathbb{L}^+\rangle\neq0$, the contrapedal curve of $\gamma(t)$
is $$CPe(\gamma)(t)=\left(-\frac{8\sin^2t\cos t}{\cos^2t-3\sin^2t},\frac{8\sqrt{3}\sin t\cos^2t}{3(\cos^2t-3\sin^2t)}\right).$$
In this case, $Pe(\gamma)(t_0)$ is asymptotic with the lightlike line along the positive and negative direction of $\mathbb{L}^+$. See the green curve in Figure \ref{f_1}.

If $\boldsymbol{Q}=\left(-\sqrt{3},\dfrac{\sqrt{3}}{3}\right)$, then $\boldsymbol{Q}$ coincides with $\gamma(t_0)$, the contrapedal curve of $\gamma(t)$
is
\begin{align}\nonumber
&CPe(\gamma)(t)\\ \nonumber
=&\left(\frac{-\sqrt{3}\cos^2t+\sin t\cos t-8\sin^2t\cos t}{\cos^2t-3\sin^2t},\frac{-3\sqrt{3}\sin^2t+9\sin t\cos t+8\sqrt{3}\sin t\cos^2t}{3(\cos^2t-3\sin^2t)}\right).
\end{align}

In this case, $CPe(\gamma)(t_0)$ is a lightlike point. See the orange dashed curve in Figure \ref{f_1}.

If $\boldsymbol{Q}=\left(0,\dfrac{4\sqrt{3}}{3}\right)$, then $\boldsymbol{Q}$ is on the tangent line of $\gamma(t_0)$, the pedal curve of $\gamma(t)$
is $$CPe(\gamma)(t)=\left(\frac{-8\sin^2t\cos t+4\sin t\cos t}{\cos^2t-3\sin^2t},\frac{-12\sqrt{3}\sin^2t+8\sqrt{3}\sin t\cos^2t}{3(\cos^2t-3\sin^2t)}\right).$$
In this case, $CPe(\gamma)(t_0)$ is a spacelike point. See the red dashed curve in Figure \ref{f_1}.
\begin{figure}[ h ]
\centering
\includegraphics [width =6cm]{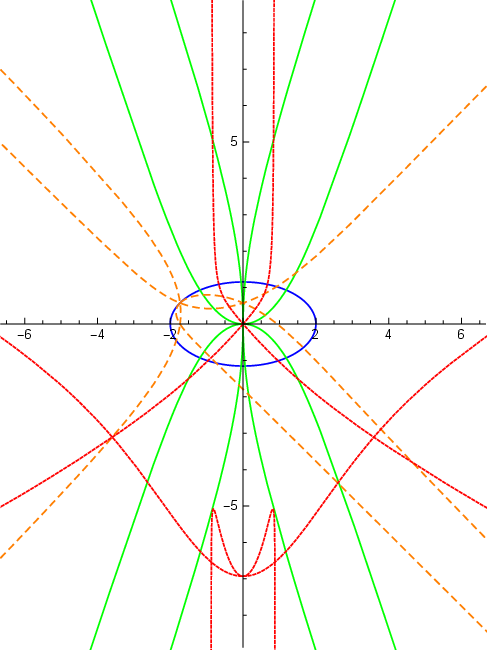}
\caption{The mixed-type curve (blue) and its contrapedal curves.}\label{f_1}
\end{figure}
\end{example}
\pagebreak
\begin{example}
Let $\gamma:(-1,1)\rightarrow\mathbb{R}_1^2$ be a regular mixed-type curve, where
$$\gamma(t)=(t, t^2).$$
When $t_0=\dfrac{1}{2}\pi$, $\gamma(t_0)$ is a lightlike point. See the blue curve in Figure \ref{f_2}.
If $\boldsymbol{Q}=(0,0)$, then $\langle\boldsymbol{Q}-\gamma(t_0),\mathbb{L}^+\rangle\neq0$, the contrapedal curve of $\gamma(t)$
is $$CPe(\gamma)(t)=\left(\frac{2t^3-t}{4t^2-1},\frac{4t^4-2t^2}{4t^2-1}\right).$$
In this case, $CPe(\gamma)(t_0)$ is asymptotic with the lightlike line along the positive and negative direction of $\mathbb{L}^+$. See the green curve in Figure \ref{f_2}.

If $\boldsymbol{Q}=\left(\dfrac{1}{2},\dfrac{1}{4}\right)$, then $\boldsymbol{Q}$ coincides with $\gamma(t_0)$, the contrapedal curve of $\gamma(t)$
is $$CPe(\gamma)(t)=\left(\frac{2t^2+3t}{4t+2},\frac{8t^3+4t^2-2t+1}{8t+4}\right).$$
In this case, $Pe(\gamma)(t_0)$ is a lightlike point. See the orange dashed curve in Figure \ref{f_2}.

If $\boldsymbol{Q}=\left(1,\dfrac{3}{4}\right)$, then $\boldsymbol{Q}$ is on the tangent line of $\gamma(t_0)$, the pedal curve of $\gamma(t)$
is $$CPe(\gamma)(t)=\left(\frac{2t^2+5t}{4t+2},\frac{8t^3+4t^2-2t+3}{8t+4}\right).$$
In this case, $CPe(\gamma)(t_0)$ is a timelike point. See the red dashed curve in Figure \ref{f_2}.
\begin{figure}[ h ]
\centering
\includegraphics [width =6cm]{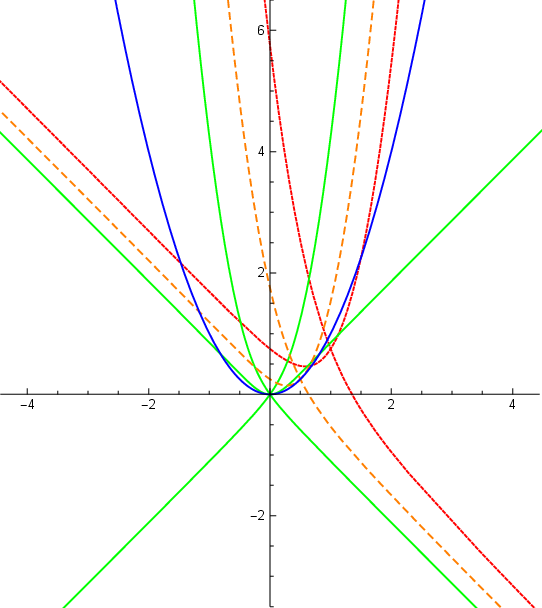}
\caption{The mixed-type curve (blue) and its contrapedal curves.}\label{f_2}
\end{figure}
\end{example}
\pagebreak
\begin{example}
Let $\gamma:(-1,1)\rightarrow\mathbb{R}_1^2$ be a regular mixed-type curve, where
$$\gamma(t)=(t, t^3+t).$$
When $t_0=0$, $\gamma(t_0)$ is a lightlike point and it is an ordinary inflection. See the blue curve in Figure \ref{f_3}.

\begin{figure}[h]
\centering
\includegraphics [width =3.7cm]{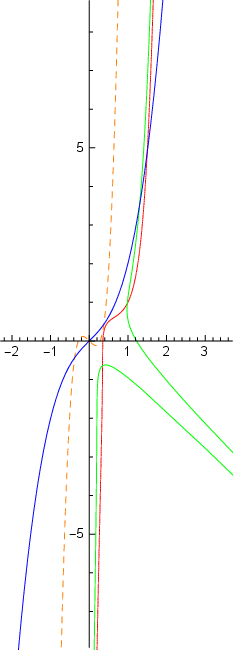}
\caption{The mixed-type curve (blue) and its contrapedal curves.}\label{f_3}
\end{figure}

If $\boldsymbol{Q}=(1,2)$, then $\langle\boldsymbol{Q}-\gamma(t_0),\mathbb{L}^+\rangle\neq0$, the contrapedal curve of $\gamma(t)$
is $$CPe(\gamma)(t)=\left(\frac{3t^5+9t^4+4t^3+1}{9t^4+6t^2},\frac{9t^7+15t^5+4t^3+3t^2-1}{9t^4+6t^2}\right).$$
In this case, $CPe(\gamma)(t_0)$ is asymptotic with the lightlike line along the positive and negative direction of $\mathbb{L}^+$. See the green curve in Figure \ref{f_3}.

If $\boldsymbol{Q}=(0,0)$, then $\boldsymbol{Q}$ coincides with $\gamma(t_0)$, the contrapedal curve of $\gamma(t)$
is $$CPe(\gamma)(t)=\left(\frac{3t^3+4t}{9t^2+6},\frac{9t^5+15t^3-4t}{9t^2+6}\right).$$
In this case, $CPe(\gamma)(t_0)$ is a lightlike point. See the orange dashed curve in Figure \ref{f_3}.

If $\boldsymbol{Q}=(1,1)$, then $\boldsymbol{Q}$ is on the tangent line of $\gamma(t_0)$, the contrapedal curve of $\gamma(t)$
is $$CPe(\gamma)(t)=\left(\frac{3t^3+9t^2+4t+3}{9t^2+6},\frac{9t^5+15t^3+4t+3}{9t^2+6}\right).$$
In this case, $CPe(\gamma)(t_0)$ is a lightlike point. See the red dashed curve in Figure \ref{f_3}.
\end{example}

\section{Relationships among pedal curves, contrapedal curves and evolutes of mixed-type curves}\label{sec_3}
As the pedal curves, contrapedal curves and evolutes are all significant curves closely related to the base curve, we would like to investigate the relationships among them in this section. Firstly we study the relationship between the pedal curves and contrapedal curves of a mixed-type curve.

We consider $Pe(\gamma)(t_0)$ and $CPe(\gamma)(t_0)$ when $\gamma(t_0)$ is a lightlike point, and we have the following proposition.

\begin{proposition}\label{prop1}
Let $\gamma:I\rightarrow\mathbb{R}_1^2$ be a regular mixed-type curve, $\boldsymbol{Q}$ be a point in $\mathbb{R}_1^2$, $Pe({\gamma}):I\rightarrow\mathbb{R}_1^2$ be the pedal curve of $\gamma$ and $CPe({\gamma}):I\rightarrow\mathbb{R}_1^2$ be the contrapedal curve of $\gamma$. Suppose that $\gamma(t_0)$ is a lightlike point, if $\boldsymbol{Q}$ is on the tangent line of $\gamma(t_0)$, then $Pe(\gamma)(t_0)=CPe(\gamma)(t_0)$.
\end{proposition}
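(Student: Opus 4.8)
The plan is to deduce $Pe(\gamma)(t_0)=CPe(\gamma)(t_0)$ by comparing the two curves at non-lightlike parameters close to $t_0$ and then letting $t\to t_0$; recall that, $\gamma(t_0)$ being lightlike, $Pe(\gamma)(t_0)$ and $CPe(\gamma)(t_0)$ are by definition these limits, and that the hypothesis ``$\boldsymbol{Q}$ lies on the tangent line of $\gamma(t_0)$'' is exactly what makes the limits finite (cf.\ the Remark above). First I would recall the pedal curve $Pe(\gamma)$ from \cite{z-p-pedal}: it is the pseudo-orthogonal projection of $\boldsymbol{Q}$ onto the tangent line of $\gamma$, so that
\begin{align}\nonumber
Pe(\gamma)(t)=\gamma(t)+\frac{\langle\boldsymbol{Q}-\gamma(t),\dot\gamma(t)\rangle}{\langle\dot\gamma(t),\dot\gamma(t)\rangle}\dot\gamma(t)=\gamma(t)-\frac{\langle\boldsymbol{Q}-\gamma(t),\alpha(t)\mathbb{L}^++\beta(t)\mathbb{L}^-\rangle}{4\alpha(t)\beta(t)}\big(\alpha(t)\mathbb{L}^++\beta(t)\mathbb{L}^-\big).
\end{align}
Combining this with the formula (\ref{eq1}) for $CPe(\gamma)$, expanding both in the lightcone frame, and simplifying via $\langle\mathbb{L}^+,\mathbb{L}^+\rangle=\langle\mathbb{L}^-,\mathbb{L}^-\rangle=0$ and $\langle\mathbb{L}^+,\mathbb{L}^-\rangle=-2$, I expect a large cancellation leaving the identity
\begin{align}\nonumber
Pe(\gamma)(t)-CPe(\gamma)(t)=-\frac{\alpha(t)\langle\boldsymbol{Q}-\gamma(t),\mathbb{L}^+\rangle}{2\beta(t)}\mathbb{L}^+-\frac{\beta(t)\langle\boldsymbol{Q}-\gamma(t),\mathbb{L}^-\rangle}{2\alpha(t)}\mathbb{L}^-,
\end{align}
valid for every $t$ with $\alpha(t)\beta(t)\neq0$. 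The whole statement then reduces to showing both summands on the right tend to $\boldsymbol{0}$ as $t\to t_0$.

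Next I would split into the two types of lightlike point. In the case $\alpha(t_0)\neq0$ and $\beta(t_0)=0$ (CASE I of Section \ref{sec_2}), the hypothesis on $\boldsymbol{Q}$ is, as noted there, exactly $\langle\boldsymbol{Q}-\gamma(t_0),\mathbb{L}^+\rangle=0$. The second summand tends to $\boldsymbol{0}$ trivially, since $\beta(t_0)=0$ and $\alpha(t_0)\neq0$. The first summand is the only delicate point: it is a $0/0$ indeterminacy, both $\langle\boldsymbol{Q}-\gamma(t),\mathbb{L}^+\rangle$ and $\beta(t)$ vanishing at $t_0$. Here I would exploit the identity $\frac{d}{dt}\langle\boldsymbol{Q}-\gamma(t),\mathbb{L}^+\rangle=-\langle\dot\gamma(t),\mathbb{L}^+\rangle=2\beta(t)$; by L'Hôpital's rule $\lim_{t\to t_0}\frac{\langle\boldsymbol{Q}-\gamma(t),\mathbb{L}^+\rangle}{\beta(t)}=\lim_{t\to t_0}\frac{2\beta(t)}{\dot\beta(t)}=0$ whenever $\dot\beta(t_0)\neq0$ (equivalently, whenever $\gamma(t_0)$ is not an inflection), and in the degenerate cases the same conclusion follows by a further application of the rule (or directly from the closed forms recorded below). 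Consequently the first summand also tends to $\boldsymbol{0}$, so $Pe(\gamma)(t_0)=CPe(\gamma)(t_0)$. The remaining case $\alpha(t_0)=0$, $\beta(t_0)\neq0$ (CASE II), in which the hypothesis reads $\langle\boldsymbol{Q}-\gamma(t_0),\mathbb{L}^-\rangle=0$, is identical after the substitution $\mathbb{L}^+\leftrightarrow\mathbb{L}^-$, $\alpha\leftrightarrow\beta$, using $\frac{d}{dt}\langle\boldsymbol{Q}-\gamma(t),\mathbb{L}^-\rangle=2\alpha(t)$.

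The one real obstacle is the single $0/0$ limit just described; everything else is bookkeeping with the lightcone frame. As a cross-check — and a shorter alternative route — one may instead compare the closed forms directly: in CASE I the value $CPe(\gamma)(t_0)=\gamma(t_0)-\frac{1}{4}\langle\boldsymbol{Q}-\gamma(t_0),\mathbb{L}^-\rangle\mathbb{L}^+$ is already recorded in Section \ref{sec_2}, and the pedal curve of \cite{z-p-pedal} has that very same limiting value there; likewise in CASE II both limits equal $\gamma(t_0)-\frac{1}{4}\langle\boldsymbol{Q}-\gamma(t_0),\mathbb{L}^+\rangle\mathbb{L}^-$. In all situations the two points coincide, which is the claim.
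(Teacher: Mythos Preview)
Your proposal is correct. The paper's own proof is precisely the ``shorter alternative route'' you sketch in your last paragraph: it simply reads off the closed forms at the lightlike point in each of the two cases and observes that both curves equal $\gamma(t_0)-\tfrac{1}{4}\langle\boldsymbol{Q}-\gamma(t_0),\mathbb{L}^-\rangle\mathbb{L}^+$ (respectively $\gamma(t_0)-\tfrac{1}{4}\langle\boldsymbol{Q}-\gamma(t_0),\mathbb{L}^+\rangle\mathbb{L}^-$), with no further analysis. Your primary argument---deriving the exact difference
\[
Pe(\gamma)(t)-CPe(\gamma)(t)=-\frac{\alpha(t)\langle\boldsymbol{Q}-\gamma(t),\mathbb{L}^+\rangle}{2\beta(t)}\mathbb{L}^+-\frac{\beta(t)\langle\boldsymbol{Q}-\gamma(t),\mathbb{L}^-\rangle}{2\alpha(t)}\mathbb{L}^-
\]
and handling the single $0/0$ term via $\tfrac{d}{dt}\langle\boldsymbol{Q}-\gamma(t),\mathbb{L}^\pm\rangle=2\beta(t)$ (resp.\ $2\alpha(t)$)---is a genuinely different and more self-contained route: it does not rely on the limiting formulas being pre-computed, and in fact it \emph{justifies} those formulas (the paper states them without verifying the indeterminate piece vanishes). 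The trade-off is that the paper's argument is a two-line citation, while yours carries the analytic content explicitly; your approach also makes transparent why the tangent-line hypothesis is exactly what is needed.
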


\begin{proof}
Suppose that $\gamma(t_0)$ is a non-lightlike point. When $\alpha(t_0)\neq0$ and $\beta(t_0)=0$, according to the definitions of the pedal curves and contrapedal curves of a regular mixed-type curve, if $\boldsymbol{Q}$ is on the tangent line of $\gamma(t_0)$, then
$$Pe(\gamma)(t_0)=CPe(\gamma)(t_0)
=\gamma(t_0)-\frac{1}{4}\langle\boldsymbol{Q}-\gamma(t_0),{\mathbb{L}^-}\rangle{\mathbb{L}^+}.$$

Similarly, when $\alpha(t_0)=0$ and $\beta(t_0)\neq0$, if $\boldsymbol{Q}$ is on the tangent line of $\gamma(t_0)$, then
$$Pe(\gamma)(t_0)=CPe(\gamma)(t_0)
=\gamma(t_0)-\frac{1}{4}\langle\boldsymbol{Q}-\gamma(t_0),{\mathbb{L}^+}\rangle{\mathbb{L}^-}.$$
Therefore, if $\gamma(t_0)$ is a lightlike point and $\boldsymbol{Q}$ is on the tangent line of $\gamma(t_0)$, then $Pe(\gamma)(t_0)=CPe(\gamma)(t_0)$.
\end{proof}


The following example can show this property well.

\begin{example}
Let $\gamma:(-1,1)\rightarrow\mathbb{R}_1^2$ be a regular mixed-type curve, where
$$\gamma(t)=(t, t^3+t).$$
When $t_0=0$, $\gamma(t_0)$ is a lightlike point. See the blue curve in Figure \ref{f_4}.

Let $\boldsymbol{Q}=(1,1)$, then $\boldsymbol{Q}$ is on the tangent line of $\gamma(t_0)$. In this case, the pedal curve of $\gamma(t)$
is $$Pe(\gamma)(t)=\left(\frac{6t^3+2t+3}{9t^2+6},\frac{9t^2+2t+3}{9t^2+6}\right).$$
See the red dashed curve in Figure \ref{f_4}.
The contrapedal curve of $\gamma(t)$
is $$CPe(\gamma)(t)=\left(\frac{3t^3+9t^2+4t+3}{9t^2+6},\frac{9t^5+15t^3+4t+3}{9t^2+6}\right).$$
See the orange dashed curve in Figure \ref{f_4}.

By calculations we can obtain that
$$Pe(\gamma)(t_0)=CPe(\gamma)(t_0)=\left(-\frac{1}{2},\frac{1}{2}\right).$$
See the point $\boldsymbol{P}$ in Figure \ref{f_4}.
\begin{figure}[ h ]
\centering
\includegraphics [width =7.4cm]{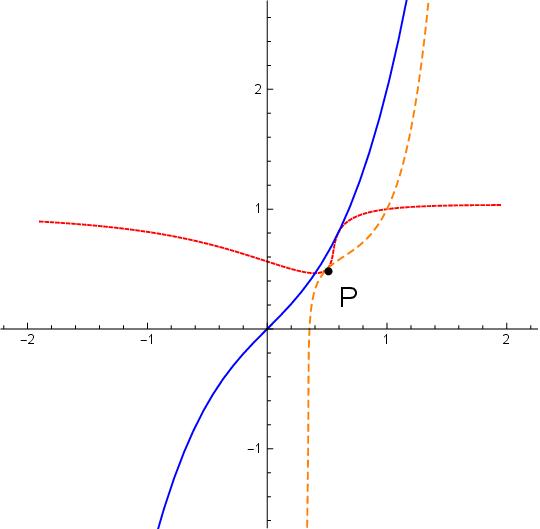}
\caption{The mixed-type curve (blue), its pedal curve and contrapedal curve.}\label{f_4}
\end{figure}
\end{example}

\begin{proposition}
Let $\gamma:I\rightarrow\mathbb{R}_1^2$ be a regular mixed-type curve, $\boldsymbol{Q}$ be a point in $\mathbb{R}_1^2$, $Pe(\gamma):I\rightarrow\mathbb{R}_1^2$ be the pedal curve of $\gamma$ and $CPe(\gamma):I\rightarrow\mathbb{R}_1^2$ be the contrapedal curve of $\gamma$. Suppose that $\gamma(t_0)$ is a non-lightlike point.

$(1)$ If $\boldsymbol{Q}$ is on the tangent line of $\gamma(t_0)$, then $Pe(\gamma)(t_0)$ is consistent with $\boldsymbol{Q}$.

$(2)$ If $\boldsymbol{Q}$ is on the normal line of $\gamma(t_0)$, then $CPe(\gamma)(t_0)$ is consistent with $\boldsymbol{Q}$.
\end{proposition}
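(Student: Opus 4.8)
The plan is to reduce both statements to the elementary fact that pseudo-orthogonal projection onto a non-lightlike line is the identity on that line, together with the observation that at a non-lightlike point the tangent and normal lines are both non-lightlike. So I would first record the relevant non-degeneracy. Writing $\dot\gamma(t_0)=\alpha(t_0)\mathbb{L}^++\beta(t_0)\mathbb{L}^-$ and $\dot\gamma(t_0)^\perp=\alpha(t_0)\mathbb{L}^+-\beta(t_0)\mathbb{L}^-$, one has $\langle\dot\gamma(t_0),\dot\gamma(t_0)\rangle=-4\alpha(t_0)\beta(t_0)$ and $\langle\dot\gamma(t_0)^\perp,\dot\gamma(t_0)^\perp\rangle=4\alpha(t_0)\beta(t_0)$; since $\gamma(t_0)$ is a non-lightlike point, $\alpha(t_0)\beta(t_0)\neq0$, so both $\dot\gamma(t_0)$ and $\dot\gamma(t_0)^\perp$ are non-lightlike, and the tangent line and normal line of $\gamma$ at $t_0$ are non-parallel non-lightlike lines. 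Here the \emph{normal line of $\gamma(t_0)$} is understood to be the line through $\gamma(t_0)$ with direction $\dot\gamma(t_0)^\perp$.

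For part $(2)$ I would rewrite the defining equation (\ref{eq1}) in projection form: since $\alpha(t_0)\mathbb{L}^+-\beta(t_0)\mathbb{L}^-=\dot\gamma(t_0)^\perp$ and $4\alpha(t_0)\beta(t_0)=\langle\dot\gamma(t_0)^\perp,\dot\gamma(t_0)^\perp\rangle\neq0$, formula (\ref{eq1}) at $t_0$ reads
$$CPe(\gamma)(t_0)=\gamma(t_0)+\frac{\langle\boldsymbol{Q}-\gamma(t_0),\dot\gamma(t_0)^\perp\rangle}{\langle\dot\gamma(t_0)^\perp,\dot\gamma(t_0)^\perp\rangle}\,\dot\gamma(t_0)^\perp .$$
If $\boldsymbol{Q}$ lies on the normal line of $\gamma(t_0)$, write $\boldsymbol{Q}-\gamma(t_0)=\lambda\,\dot\gamma(t_0)^\perp$ for some $\lambda\in\mathbb{R}$; substituting, the coefficient becomes $\lambda\langle\dot\gamma(t_0)^\perp,\dot\gamma(t_0)^\perp\rangle/\langle\dot\gamma(t_0)^\perp,\dot\gamma(t_0)^\perp\rangle=\lambda$, hence $CPe(\gamma)(t_0)=\gamma(t_0)+\lambda\,\dot\gamma(t_0)^\perp=\boldsymbol{Q}$. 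For part $(1)$ I would invoke the analogous formula for the pedal curve from \cite{z-p-pedal}, namely $Pe(\gamma)(t)=\gamma(t)+\dfrac{\langle\boldsymbol{Q}-\gamma(t),\dot\gamma(t)\rangle}{\langle\dot\gamma(t),\dot\gamma(t)\rangle}\dot\gamma(t)$, which is well defined at $t_0$ because $\langle\dot\gamma(t_0),\dot\gamma(t_0)\rangle\neq0$; if $\boldsymbol{Q}=\gamma(t_0)+\mu\,\dot\gamma(t_0)$ lies on the tangent line, the same substitution yields coefficient $\mu$ and therefore $Pe(\gamma)(t_0)=\boldsymbol{Q}$. (Equivalently, one can phrase this coordinate-free: $Pe(\gamma)(t_0)$ is by definition the pseudo-orthogonal projection of $\boldsymbol{Q}$ onto the tangent line, a non-lightlike line, and such a projection fixes the points of the line.)

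I do not expect a genuine obstacle here; the statement is essentially a consistency check on the definitions. The only step needing care is the non-degeneracy verification — that $\dot\gamma(t_0)$ and $\dot\gamma(t_0)^\perp$ are non-lightlike \emph{precisely because} $\gamma(t_0)$ is non-lightlike — since this is what makes the two projection formulas meaningful at $t_0$ and guarantees that the line parameter ($\mu$ in case $(1)$, $\lambda$ in case $(2)$) is recovered exactly by the projection coefficient. Once that is in place, both assertions follow by a one-line substitution.
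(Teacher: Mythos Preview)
Your proposal is correct and follows essentially the same route as the paper: both write $\boldsymbol{Q}-\gamma(t_0)$ as a scalar multiple of the relevant direction vector (tangent for $(1)$, normal for $(2)$), substitute into the projection formula, and observe that the coefficient collapses to that scalar so the output is $\boldsymbol{Q}$. The only difference is cosmetic---the paper carries out the substitution explicitly in the lightcone frame $\{\mathbb{L}^+,\mathbb{L}^-\}$, whereas you work with $\dot\gamma(t_0)$ and $\dot\gamma(t_0)^\perp$ directly and phrase the result as ``projection onto a non-lightlike line fixes the line.''
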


\begin{proof}
Firstly, suppose that $\boldsymbol{Q}$ is on the tangent line of a non-lightlike point $\gamma(t_0)$, then we have $\boldsymbol{Q}-\gamma(t_0)$ and $\alpha(t_0){\mathbb{L}^+}+\beta(t_0){\mathbb{L}^-}$ are linearly dependent. Therefore, there exists $\lambda \in \mathbb{R}$, such that
$$\boldsymbol{Q}-\gamma(t_0)=\lambda(\alpha(t_0){\mathbb{L}^+}+\beta(t_0)\mathbb{L}^-).$$
Then, we have
\begin{align}\nonumber
&Pe(\gamma)(t_0)\\ \nonumber
=&\gamma(t_0)-\lambda\frac{\langle\alpha(t_0){\mathbb{L}^+}+\beta(t_0){\mathbb{L}^-},
\alpha(t_0){\mathbb{L}^+}+\beta(t_0){\mathbb{L}^-}\rangle}{4\alpha(t_0)\beta(t_0)}(\alpha(t_0){\mathbb{L}^+}+\beta(t_0){\mathbb{L}^-})\\ \nonumber
=&\gamma(t_0)-\lambda\frac{-4\alpha(t_0)\beta(t_0)}{4\alpha(t_0)\beta(t_0)}(\alpha(t_0){\mathbb{L}^+}+\beta(t_0){\mathbb{L}^-})\\ \nonumber
=&\gamma(t_0)+\lambda(\alpha(t_0){\mathbb{L}^+}+\beta(t_0){\mathbb{L}^-})\\ \nonumber
=&\boldsymbol{Q}.
\end{align}
Thus, $Pe(\gamma)(t_0)$ is consistent with $\boldsymbol{Q}$.

Then, we consider that $\boldsymbol{Q}$ is on the normal line of $\gamma(t_0)$. In this case, $\boldsymbol{Q}-\gamma(t_0)$ and $\alpha(t_0){\mathbb{L}^+}-\beta(t_0){\mathbb{L}^-}$ are linearly dependent, and there exists $\eta \in \mathbb{R}$, such that
$$\boldsymbol{Q}-\gamma(t_0)=\eta(\alpha(t_0){\mathbb{L}^+}-\beta(t_0)\mathbb{L}^-).$$
Therefore,
\begin{align}\nonumber
&CPe(\gamma)(t_0)\\ \nonumber
=&\gamma(t)+\eta\frac{\langle\alpha(t_0){\mathbb{L}^+}-\beta(t_0){\mathbb{L}^-},\alpha(t_0){\mathbb{L}^+}-\beta(t_0){\mathbb{L}^-}\rangle}
{4\alpha(t_0)\beta(t_0)}(\alpha(t_0){\mathbb{L}^+}-\beta(t_0){\mathbb{L}^-})\\ \nonumber
=&\gamma(t_0)-\eta\frac{-4\alpha(t_0)\beta(t_0)}{4\alpha(t_0)\beta(t_0)}(\alpha(t_0){\mathbb{L}^+}-\beta(t_0){\mathbb{L}^-})\\ \nonumber
=&\gamma(t_0)+\eta(\alpha(t_0){\mathbb{L}^+}-\beta(t_0){\mathbb{L}^-})\\ \nonumber
=&\boldsymbol{Q}.
\end{align}
Hence, $CPe(\gamma)(t_0)$ is consistent with $\boldsymbol{Q}$.
\end{proof}

We give an example to explain above proposition.

\begin{example}
Let $\gamma:(-1,1)\rightarrow\mathbb{R}_1^2$ be a regular mixed-type curve, where
$$\gamma(t)=(t, t^2).$$
When $t_0=1$, $\gamma(t_0)$ is a spacelike point. See the blue curve in Figure \ref{f_5}.

Let $\boldsymbol{Q}_1=(2,3)$, then $\boldsymbol{Q}_1$ is on the tangent line of $\gamma(t_0)$. The pedal curve of $\gamma(t)$
is $$Pe(\gamma)(t)=\left(\frac{2t^3+6t-2}{4t^2-1},\frac{13t^2-4t}{4t^2-1}\right).$$
In this case, $Pe(\gamma)(t_0)=(2,3)$, it is consistent with $\boldsymbol{Q}_1$. See the green curve in Figure \ref{f_5}.

Let $\boldsymbol{Q}_2=(3,2)$, then $\boldsymbol{Q}_2$ is on the normal line of $\gamma(t_0)$. The contrapedal curve of $\gamma(t)$
is $$CPe(\gamma)(t)=\left(\frac{2t^3+12t^2-5t}{4t^2-1},\frac{4t^4-2t^2+6t-2}{4t^2-1}\right).$$
In this case, $CPe(\gamma)(t_0)=(3,2)$, it is consistent with $\boldsymbol{Q}_2$. See the orange dashed curve in Figure \ref{f_5}.
\begin{figure}[ h ]
\centering
\includegraphics [width =8.5cm]{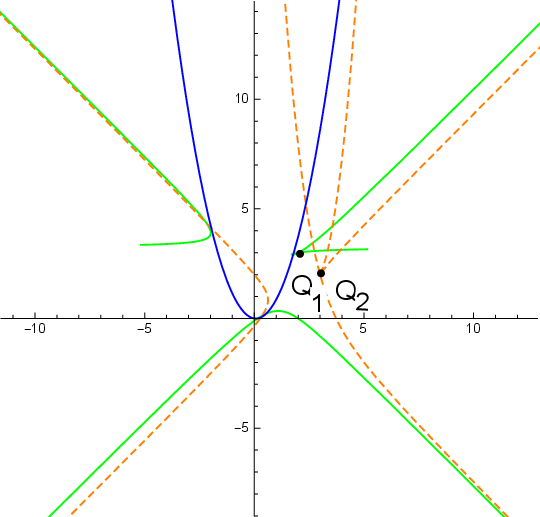}
\caption{The mixed-type curve (blue) and its pedal curve and contrapedal curve.}\label{f_5}
\end{figure}
\end{example}
Consider pedal curves, contrapedal curves and evolute of a mixed-type curve and we can have the theorem as follows.
\begin{theorem}
Let $\gamma:I\rightarrow\mathbb{R}_1^2$ be a regular mixed-type curve, $\boldsymbol{Q}$ be a point in $\mathbb{R}_1^2$, $Pe(\gamma):I\rightarrow\mathbb{R}_1^2$ be the pedal curve of $\gamma$, $CPe(\gamma):I\rightarrow\mathbb{R}_1^2$ be the contrapedal curve of $\gamma$ and $Ev(\gamma):I\rightarrow\mathbb{R}_1^2$ be the evolute of $\gamma$. If $\boldsymbol{Q}$ is consistent with the lightlike point or $\boldsymbol{Q}$ is on the tangent line of the lightlike point, then
$$Pe(Ev(\gamma))(t)=CPe(\gamma)(t).$$
\end{theorem}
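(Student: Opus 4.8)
The plan is to write $Pe(Ev(\gamma))$ out explicitly and verify that it reduces to formula~(\ref{eq1}). Recall, from \cite{z-p-pedal} (it is the exact tangent-line analogue of (\ref{eq1})), that a regular mixed-type curve $\delta$ with lightlike tangential data $(a,b)$, so that $\dot\delta=a\mathbb{L}^++b\mathbb{L}^-$ and $\langle\dot\delta,\dot\delta\rangle=-4ab$, has pedal curve
$$Pe(\delta)(t)=\delta(t)-\frac{\langle\boldsymbol{Q}-\delta(t),a(t)\mathbb{L}^++b(t)\mathbb{L}^-\rangle}{4a(t)b(t)}\bigl(a(t)\mathbb{L}^++b(t)\mathbb{L}^-\bigr).$$
So the first step is to compute the lightlike tangential data of $Ev(\gamma)$.

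Write $Ev(\gamma)=\gamma+\lambda\,\dot\gamma^\perp$ with $\lambda=-2\alpha\beta/(\dot\alpha\beta-\alpha\dot\beta)$ and $\dot\gamma^\perp=\alpha\mathbb{L}^+-\beta\mathbb{L}^-$. Differentiating and using $\dot\gamma=\alpha\mathbb{L}^++\beta\mathbb{L}^-$ gives
$$\dot{Ev(\gamma)}=\bigl(\alpha+\dot\lambda\alpha+\lambda\dot\alpha\bigr)\mathbb{L}^++\bigl(\beta-\dot\lambda\beta-\lambda\dot\beta\bigr)\mathbb{L}^-,$$
and demanding that this be proportional to $\dot\gamma^\perp$ reduces to $2=\lambda(\alpha\dot\beta-\dot\alpha\beta)/(\alpha\beta)$, i.e.\ to the identity that defines $\lambda$. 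Hence there is a function $\kappa$ with $\dot{Ev(\gamma)}=\kappa\,(\alpha\mathbb{L}^+-\beta\mathbb{L}^-)$, so $Ev(\gamma)$ has lightlike tangential data $(a,b)=(\alpha\kappa,-\beta\kappa)$. (This is the Minkowski analogue of the classical fact that the tangent of the evolute points along the normal of the base curve.)

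Now put $\delta=Ev(\gamma)$, $a=\alpha\kappa$, $b=-\beta\kappa$ into the pedal formula. Since $a\mathbb{L}^++b\mathbb{L}^-=\kappa(\alpha\mathbb{L}^+-\beta\mathbb{L}^-)$ and $4ab=-4\alpha\beta\kappa^2$, the factor $\kappa$ cancels entirely and
$$Pe(Ev(\gamma))(t)=Ev(\gamma)(t)+\frac{\langle\boldsymbol{Q}-Ev(\gamma)(t),\alpha\mathbb{L}^+-\beta\mathbb{L}^-\rangle}{4\alpha\beta}\bigl(\alpha\mathbb{L}^+-\beta\mathbb{L}^-\bigr).$$
Finally I would substitute $Ev(\gamma)=\gamma+\lambda(\alpha\mathbb{L}^+-\beta\mathbb{L}^-)$ and expand the inner product using $\langle\alpha\mathbb{L}^+-\beta\mathbb{L}^-,\alpha\mathbb{L}^+-\beta\mathbb{L}^-\rangle=4\alpha\beta$: the term $+\lambda(\alpha\mathbb{L}^+-\beta\mathbb{L}^-)$ carried by $Ev(\gamma)$ cancels the $-\lambda(\alpha\mathbb{L}^+-\beta\mathbb{L}^-)$ produced by expanding the bracket, and what remains is exactly $CPe(\gamma)(t)$. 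This proves the identity wherever $\gamma(t)$ is non-lightlike and $\kappa(t)\neq0$.

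The main obstacle is the bookkeeping at the exceptional points. The statement presupposes that $\gamma$ has no inflections (so that $Ev(\gamma)$ is defined on all of $I$), hence every lightlike point $\gamma(t_0)$ is a non-inflection one; there $\alpha(t_0)\beta(t_0)=0$ forces $\lambda(t_0)=0$, so $Ev(\gamma)(t_0)=\gamma(t_0)$, and since $\dot\gamma(t_0)$ is lightlike one has $\dot\gamma(t_0)^\perp=\pm\dot\gamma(t_0)$, whence the tangent line of $Ev(\gamma)$ at $t_0$ coincides with the tangent line of $\gamma$ at $t_0$. Therefore the hypothesis that $\boldsymbol{Q}$ is consistent with, or lies on the tangent line of, the lightlike point of $\gamma$ is precisely the hypothesis that makes $Pe(Ev(\gamma))(t_0)$ well defined, and one checks by taking the limit $t\to t_0$ that it equals the value of $CPe(\gamma)(t_0)$ recorded in CASE I or CASE II. At the isolated points where $\kappa$ vanishes but $\gamma$ is non-lightlike (singular points of $Ev(\gamma)$), the simplified formula above is still regular because $\alpha\beta\neq0$, so continuity gives $Pe(Ev(\gamma))=CPe(\gamma)$ there as well. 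Assembling these cases yields the equality on all of $I$.
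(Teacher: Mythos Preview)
Your proof is correct and follows essentially the same route as the paper: compute the lightlike tangential data of $Ev(\gamma)$, observe it is proportional to $(\alpha\mathbb{L}^+-\beta\mathbb{L}^-)$, substitute into the pedal formula, and simplify to obtain $CPe(\gamma)$. Your version is in fact tidier---you avoid writing out the explicit coefficient $\kappa$ since it cancels, and you make the $\lambda$-cancellation explicit---and more complete, since you address the lightlike points and the singular points of the evolute, which the paper's proof passes over in silence.
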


\begin{proof}
Since the evolute of a regular mixed-type curve without inflections is given by
$$Ev(\gamma)(t)=\gamma(t)-\frac{2\alpha(t)\beta(t)}{\dot\alpha(t)\beta(t)-\alpha(t)\dot\beta(t)}(\alpha(t)\mathbb{L}^+-\beta(t)\mathbb{L}^-),$$
we have
\pagebreak
\begin{align}\nonumber
&\dot Ev(\gamma)(t)\\ \nonumber
=&\alpha(t)\mathbb{L}^++\beta(t)\mathbb{L}^-\\ \nonumber
&-\frac{2(\dot\alpha(t)\beta(t)+\alpha(t)\dot\beta(t))\dot\alpha(t)\beta(t)-\alpha(t)\dot\beta(t)
-2\alpha(t)\beta(t)\ddot\alpha(t)\beta(t)-\alpha(t)\ddot\beta(t)}
{(\dot\alpha(t)\beta(t)-\alpha(t)\dot\beta(t))^2}\\ \nonumber
&~~~~(\alpha(t)\mathbb{L}^+-\beta(t)\mathbb{L}^-)\\ \nonumber
&-\frac{2\alpha(t)\beta(t)}{\dot\alpha(t)\beta(t)-\alpha(t)\dot\beta(t)}(\dot\alpha(t)\mathbb{L}^+-\dot\beta(t)\mathbb{L}^-)\\ \nonumber
=&\alpha(t)(1-\frac{2\dot\alpha^2(t)\beta^2(t)-2\alpha^2(t)\dot\beta(t)-2\alpha(t)\beta(t)(\ddot\alpha(t)\beta(t)-\alpha(t)\ddot\beta(t))}
{(\dot\alpha(t)\beta(t)-\alpha(t)\dot\beta(t))^2}\\ \nonumber
&-\frac{2\dot\alpha(t)\beta(t)}{\dot\alpha(t)\beta(t)-\alpha(t)\dot\beta(t)})\mathbb{L}^+\\ \nonumber
&+\beta(t)(1+\frac{2\dot\alpha^2(t)\beta^2(t)-2\alpha^2(t)\dot\beta(t)-2\alpha(t)\beta(t)(\ddot\alpha(t)\beta(t)-\alpha(t)\ddot\beta(t))}
{(\dot\alpha(t)\beta(t)-\alpha(t)\dot\beta(t))^2}\\ \nonumber
&+\frac{2\dot\alpha(t)\beta(t)}{\dot\alpha(t)\beta(t)-\alpha(t)\dot\beta(t)})\mathbb{L}^-\\ \nonumber
=&\alpha(t)\frac{-3\dot\alpha^2(t)\beta^2(t)+3\alpha(t)\dot\beta^2(t)+2\alpha(t)\beta(t)(\ddot\alpha(t)\beta(t)-\alpha(t)\ddot\beta(t))}
{(\dot\alpha(t)\beta(t)-\alpha(t)\dot\beta(t))^2}\mathbb{L}^+\\ \nonumber
&-\beta(t)\frac{-3\dot\alpha^2(t)\beta^2(t)+3\alpha(t)\dot\beta^2(t)+2\alpha(t)\beta(t)(\ddot\alpha(t)\beta(t)-\alpha(t)\ddot\beta(t))}
{(\dot\alpha(t)\beta(t)-\alpha(t)\dot\beta(t))^2}\mathbb{L}^-.
\end{align}

Writing
$$\alpha_{Ev}(t)=\alpha(t)\frac{-3\dot\alpha^2(t)\beta^2(t)+3\alpha(t)\dot\beta^2(t)+2\alpha(t)\beta(t)(\ddot\alpha(t)\beta(t)-\alpha(t)\ddot\beta(t))}
{(\dot\alpha(t)\beta(t)-\alpha(t)\dot\beta(t))^2}$$
and
$$\beta_{Ev}(t)=-\beta(t)\frac{-3\dot\alpha^2(t)\beta^2(t)+3\alpha(t)\dot\beta^2(t)+2\alpha(t)\beta(t)(\ddot\alpha(t)\beta(t)-\alpha(t)\ddot\beta(t))}
{(\dot\alpha(t)\beta(t)-\alpha(t)\dot\beta(t))^2}.$$
Then,
\begin{align}\nonumber
&Pe(Ev(\gamma))(t)\\ \nonumber
=&Ev(\gamma)(t)-\frac{\langle\boldsymbol{Q}-Ev(\gamma)(t),\alpha_{Ev}(t)\rangle\mathbb{L}^++\beta_{Ev}(t)\mathbb{L}^-}
{4\alpha_{Ev}(t)\beta_{Ev}(t)}(\alpha_{Ev}(t)\mathbb{L}^++\beta_{Ev}(t)\mathbb{L}^-)\\ \nonumber
=&\gamma(t)+\frac{\langle\boldsymbol{Q}-\gamma(t),\alpha(t){\mathbb{L}^+}-\beta(t){\mathbb{L}^-}\rangle}
{4\alpha(t)\beta(t)}(\alpha(t){\mathbb{L}^+}-\beta(t){\mathbb{L}^-})\\ \nonumber
=&CPe(\gamma)(t).
\end{align}
\end{proof}
According to the above theorem, the relationship among the pedal curve, the contrapedal curve and the evolute of a mixed-type curve in $\mathbb{R}_1^2$ is given.

\bmhead{Acknowledgments}

The work was supported by Natural Science Foundation of Shandong Province (No. ZR2023QA046).

\section*{Declarations}

\begin{itemize}
\item Conflict of interest: The authors declare that they have no conflicts of interest.
\end{itemize}

\bigskip
%
%
%
%

\begin{appendices}




\end{appendices}



\end{document}